\DeclareMathOperator*{\argmin}{argmin}
\newtheorem{proposition}{Proposition}
\newtheorem{corollary}{Corollary}
\newtheorem{claim}{Claim}
\newtheorem{lemma}{Lemma}
\newtheorem{theorem}{Theorem}
\newtheorem{remark}{Remark}
\newcommand{\remove}[1]{}
\begin{document}
%
% paper title
% Titles are generally capitalized except for words such as a, an, and, as,
% at, but, by, for, in, nor, of, on, or, the, to and up, which are usually
% not capitalized unless they are the first or last word of the title.
% Linebreaks \\ can be used within to get better formatting as desired.
% Do not put math or special symbols in the title.
\title{\huge Bounds and Constructions of Locally Repairable Codes: Parity-check Matrix Approach}

% author names and affiliations
% use a multiple column layout for up to three different
% affiliations
\author{Jie~Hao,
        Shu-Tao~Xia,
        Kenneth W. Shum,
        Bin~Chen, Fang-Wei Fu and Yi-Xian Yang 
\thanks{
This research is supported in part by the Beijing Natural Science Foundation under grant No. 4184093, the National Natural Science Foundation of China under grant Nos. 61801049, 61771273, 61971243 and 61571243, the National Key Research and Development Program of China under Grant 2018YFB1800204, the Nankai Zhide Foundation and the Fundamental Research Funds for Central Universities.}
\thanks{Jie Hao and Yi-Xian Yang are with the Information Security Center, Beijing University of Posts and Telecommunications, Beijing, 100876, China (emails: haojie@bupt.edu.cn, yxyang@bupt.edu.cn).}
\thanks{Shu-Tao~Xia and Bin Chen are with the Shenzhen International Graduate School, Tsinghua University, Shenzhen, 518055, China (emails:  xiast@sz.tsinghua.edu.cn, cb17@mails.tsinghua.edu.cn).}
\thanks{Kenneth W. Shum is with School of Science and Engineering, The Chinese University of Hong Kong
(Shenzhen), Shenzhen, 518172, China (email: wkshum@cuhk.edu.cn).}
\thanks{Fang-Wei Fu is with the Chern Institute of Mathematics and LPMC, and Tianjin Key Laboratory of Network and Data Security Technology, Nankai University, Tianjin, 300071, China (email: fwfu@nankai.edu.cn).}
\thanks{
This paper was presented in part at the 2018 IEEE International Symposium on Information Theory \cite{Hao2018ISIT_LRCMaximalLength},  and the 2016 IEEE International Symposium on Information Theory \cite{Hao2016ISIT_BinaryLRC}, where the upper bound  on the minimum distance of $q$-ary optimal LRCs in Sect. \ref{sec:distance-bound} was presented at \cite{Hao2018ISIT_LRCMaximalLength} and enumerations of optimal binary LRCs in Sect. \ref{sec:binary-lrc} was presented   at \cite{Hao2016ISIT_BinaryLRC}.}
}

% The paper headers
\markboth{}%
{Shell \MakeLowercase{\textit{et al.}}: Bare Demo of IEEEtran.cls for IEEE Journals}

\maketitle
\begin{abstract}
A $q$-ary $(n,k,r)$ locally repairable code (LRC) is an $[n,k,d]$ linear code over $\mathbb{F}_q$ such that every code symbol can be recovered by accessing at most $r$ other code symbols.
The well-known Singleton-like bound says that $d \le n-k-\lceil k/r\rceil +2$ and an LRC is said to be optimal if it attains this bound.
In this paper, we study the bounds and constructions of LRCs from the view of parity-check matrices.
Firstly,  a simple and unified framework based on  parity-check matrix to analyze the bounds of LRCs is proposed.
Several useful structural properties on $q$-ary optimal LRCs are obtained.
We derive an upper bound on the minimum distance of $q$-ary optimal $(n,k,r)$-LRCs in terms of the field size $q$.
Then, we focus on constructions of optimal LRCs over binary  field.
It is proved that there are only 5 classes of possible parameters with which optimal binary $(n,k,r)$-LRCs exist.
Moreover, by employing the proposed  parity-check matrix approach, we completely enumerate all these 5 classes of  possible optimal binary  LRCs attaining the Singleton-like bound in the sense of equivalence of linear codes.
\end{abstract}

% Note that keywords are not normally used for peerreview papers.
\begin{IEEEkeywords}
Locally repairable codes, parity-check matrices, upper bounds, Singleton-like bound, optimal LRCs, binary LRCs
\end{IEEEkeywords}

\IEEEpeerreviewmaketitle

\section{Introduction}

Nowadays, in order to ensure data reliability against storage node failures, redundant data are always stored in large distributed storage systems. Due to the increasing volume of data, the traditional redundancy scheme of 3-replication tends to cause massive storage overhead. Erasure codes are then introduced to reduce the storage overhead while maintaining high data reliability. The most widely used erasure codes are Reed-Solomon codes, which are a class of maximum distance separable (MDS) codes. In such case, the data is firstly divided into $k$ information packets. Then $n-k$ parity packets are generated by encoding these $k$ information packets. Finally, all these $n$ packets are stored in different storage nodes, which can tolerate any $n-k$ failures. 
In case of storage node failures, the storage system needs to repair the failed nodes to maintain data reliability. For 3-replication, when a node fails, node repairing can be accomplished  by reading the data directly from  the replication node. For the case of MDS codes, node repairing involves reading $k$ packets from other nodes, reconstructing the original data from these $k$ packets, and generating the lost packet by encoding the reconstructed data. One can see that the repair cost is much higher than that of 3-replication.

To reduce the repair cost of erasure codes, \emph{locally repairable codes} (LRCs) have emerged in recent years \cite{Gopalan2012IT_LRC,Papailiopoulos2014IT_LRC}.
Consider a $q$-ary $[n,k,d]$ linear code with length $n$, dimension $k$ and minimum distance $d$. A code symbol is said to have \emph{locality} $r$ if it can be repaired by accessing at most $r$ other code symbols.
LRCs are linear codes with locality constraints on code symbols.
An $[n,k,d]$ linear code is called an $(n,k,r)$-LRC if all the code symbols have locality $r$.
For LRCs with locality $r \ll k$, only a small number of storage nodes are  involved in repairing a failed node, which achieves low repair cost  compared to MDS codes.
Windows azure storage employed a class of LRCs as its redundancy scheme \cite{Huang2012ATC_Azure}. The Hadoop Distributed File System RAID used by Facebook implemented another type of  LRCs \cite{Sathiamoorthy2013VLDB_XOR-Elephant}.
Gopalan \emph{et al.}  \cite{Gopalan2012IT_LRC} firstly conducted theoretical analyses of the bounds for LRCs with information locality where only the information symbols satisfy the locality properties. Apparently, the bounds obtained by Gopalan \emph{et al.} also applied to LRCs with all symbol locality. %Throughout this paper, we consider  $(n,k,r)$-LRCs with all symbol locality where all  code symbols have locality $r$.

For a $q$-ary $(n,k,r)$-LRC,  the following well-known Singleton-like bound was given by Gopalan \emph{et al.}  \cite{Gopalan2012IT_LRC}
 \begin{equation} \label{singleton-like-bound}
 d \leq n -k - \left\lceil \frac{k}{r} \right\rceil + 2.
 \end{equation}
When $r=k$, it reduces to the classical Singleton bound $d \leq n - k + 1$.
Optimal LRCs with small field size are of particular interest.  Many works have proposed  constructions of optimal LRCs attaining the Singleton-like bound (\ref{singleton-like-bound}) over a relatively small field size.
Tamo and Barg \cite{TamoBarg2014IT_RSLRC} proposed an elegant construction of optimal LRCs where  the codewords are constructed by evaluations of specifically designed polynomials over a finite field and the required field size is just $q > n+1$. Specially, when $r=k$, the Tamo-Barg code can reduce to the classical Reed-Solomon code.
Cyclic  optimal  LRCs with $q \ge n+1$ which is characterized  in terms of their zeros were proposed in   \cite{Tamo2015ISIT_CyclicLRC}.
Optimal cyclic LRCs with distance 3 and 4 were proposed by Luo {et al.} \cite{Luo2018IT_CyclicLRC}.
Constructions of optimal LRCs of distance 5 and 6 by using binary constant weight codes were  proposed by Jin \cite{Jin2019IT_LRCd5and6}.
A  refined bound of $(n,k,r)$-LRCs  based on integer programming was proposed by Wang and Zhang \cite{Wang2015IT_IPBound}. 
Taking the field size into account, Cadambe and Mazumdar proposed the following  alphabet-dependent bound of $q$-ary $(n,k,r)$-LRCs  \cite{Cadambe2015IT_DimensionBound},
\begin{equation}\label{general-bound-k}
k \le \min_{ t \in Z_+} \Big[ tr + k^{(q)}_{\rm opt}(n  - t(r+1),d) \Big],
\end{equation}
where $t \le \min \left\{\left\lceil \frac{n}{r+1}\right\rceil,\left\lceil\frac{k}{r}\right\rceil\right\}$, and $k_{\text{opt}}^{(q)}(n',d')$ denotes the largest possible dimension of a $q$-ary linear code with length $n'$ and minimum distance $d'$. Note that the field size $q$ is taken into account,  the Cadambe-Mazumdar bound (\ref{general-bound-k}) is shown to be tighter than the Singleton-like bound  (\ref{singleton-like-bound}), especially when $q$ is small. Optimal LRCs meeting the Cadambe-Mazumdar bound (\ref{general-bound-k}) were proposed in \cite{Silberstein2015ISIT_AnticodeLRC}\cite{Zeh2015ITW_CyclicLRC}\cite{Silberstein2018DCCAnticodeLRC}.
In order to accomplish the local recovery in case of  more than one node failures,  two parallel generalizations of the concepts of locality were proposed.
The first type was  $(r,\delta)$-locality \cite{Prakash2012ISIT_r-deltaLRC} where the code symbol is protected by a punctured subcode with length at most $r+\delta-1$ and minimum distance at least $\delta$. Related bounds and constructions for LRCs with $(r,\delta)$-locality were given in  \cite{Prakash2012ISIT_r-deltaLRC}, \cite{Ernvall2016IT_ConstructionLRC}, \cite{Kamath2014IT_LocalRegeneration}, \cite{Rawat2014IT_Secure}, \cite{Chen2018IT_CyclicDeltaLRC}.
The other type of  generalization is   $(r,t)$-locality \cite{Wang2014IT_r-t-locality}, \cite{Rawat2016IT_r-t-locality}, where the code symbol can be repaired by $t$ disjoint repair groups of the other code symbols, each of size at most $r$.
Bounds and constructions for LRCs with $(r,t)$-locality were given in  \cite{Wang2014IT_r-t-locality}, \cite{Rawat2016IT_r-t-locality}, \cite{Tamo2016IT_r-t-locality}, \cite{Huang2016IT_BinaryLRCAvailability}, \cite{Kruglik2019IT_r-t-LRC}.

In this paper, we study the bounds and constructions of $(n,k,r)$-LRCs from the view of parity-check matrices. Firstly, a simple and unified framework based on  parity-check matrix to study  LRCs is proposed. We set up a  characterization of the parity-check matrix   of a $q$-ary $(n,k,r)$-LRC $\mathcal{C}$  by selecting $n-k$ linearly independent vectors from the dual code $\mathcal{C}^\bot$,  where the first $l$ vectors are locality-rows ensuring the locality properties.
By analyzing this proposed parity-check matrix, we give simple and unified proofs for the Singleton-like bound \eqref{singleton-like-bound}  and  Cadambe-Mazumdar bound \eqref{general-bound-k}.
Based on the new proof technique, the following structural  properties and  bounds of  $q$-ary  optimal $(n,k,r)$-LRCs  attaining the Singleton-like bound \eqref{singleton-like-bound} are obtained.
\begin{itemize}
  \item (see Theorem \ref{thm:subcode-mds}) For a $q$-ary  optimal $(n,k,r)$-LRC, by the new  proof technique, we obtain several useful properties on the structure of parity-check matrices of optimal LRCs. Particularly, we show that after puncturing arbitrary $\lceil k/r \rceil-1$ or $\lceil k/r \rceil-2$ locality-rows from the characterized parity-check matrix $H$ of a $q$-ary optimal LRC, the obtained $q$-ary subcode must be a $q$-ary MDS code or almost MDS code. 
  \item (see Theorem \ref{thm:distance-bound})  For a $q$-ary  optimal $(n,k,r)$-LRC, we derive an upper bound on its  minimum distance  in terms of the field size $q$. It is shown that the minimum distance of a $q$-ary  optimal $(n,k,r)$-LRC must satisfy $d \le 2q$. Specially, when $k-1$ is not divisible by $r$, it holds that $d \le q$. 
\end{itemize}

Then, we focus on constructions of optimal $(n,k,r)$-LRCs   over   binary field. Surprisingly, by employing the proposed parity-check matrix approach, we can completely enumerate all the optimal binary  $(n,k,r)$-LRCs   attaining the Singleton-like bound (\ref{singleton-like-bound}).

\begin{itemize}
\item
  (see Theorem \ref{thm:binary-lrc})  It is  proved that there are only $5$ classes of optimal binary $(n,k,r)$-LRCs with parameters as follows. In the sense of equivalence of linear codes, we completely enumerate all these optimal binary LRCs by presenting their explicit parity-check matrices.
\begin{enumerate}
\item $(k+k/r,k,r)$, $d=2$, $k>r\ge 1$, $r\mid k$;

\item $(k+\lceil k/r\rceil,k, r)$, $d=2$, $k>r\ge 1$, $r\nmid k$;

\item $(2k+2,k,1)$, $d=4$, $k\ge 2$;

\item $(4l,3l-2,3)$, $d=4$, $l\ge 3$;

\item  $(k+d,k,k-1)$, $3 \le d \le 4, 3\le k \le 4$.
\end{enumerate}

\end{itemize}

The rest of this paper is organized as follows. Section~\ref{sec:preliminaries} gives some notations and preliminaries.
In Section~\ref{sec:PCM}, we propose the parity-check matrix approach  to  analyze the bounds of LRCs.
Section~\ref{sec:property} presents several  structural properties of $q$-ary optimal LRCs.
Section~\ref{sec:distance-bound} derives an upper bound on the minimum distance of $q$-ary optimal LRCs.
In Section~\ref{sec:binary-lrc}, we  enumerate all the  possible  optimal binary LRCs. 
Section \ref{sec:conclusion} concludes the paper.

\section{Notations and Preliminaries} \label{sec:preliminaries}

Let ${\mathbb{F}}_q$ be a finite field with size $q$, where $q$ is a prime power.
Let $\mathcal{C}$ be a $q$-ary $[n,k,d]$ linear code with length $n$, dimension $k$ and minimum distance $d$ \cite{MacWilliamsCodingBook}.
The $k\times n$ generator matrix $G$ and  $(n-k)\times n$ parity-check matrix $H$ satisfies $G H^T = 0$.
Let $\mathcal{C}^\bot$ denote the dual code of $\mathcal{C}$.
The rows of $H$ are the codewords of $\mathcal{C}^\bot$, and are also called parity-check equations.
Let $A \otimes B$ denote the Kronecker product of matrices $A$ and $B$.
Let $\mathbf{a}=(a_1,a_2,\ldots, a_n)$ be a vector and $[n] = \{1,2,\cdots,n\}$.
The support of a vector $\mathbf{a}$ is $supp(\mathbf{a}) = \{i \in [n]: a_i \neq 0\}$, and its (Hamming) weight is $wt(\mathbf{a}) = |supp(\mathbf{a})|$. If the index of a coordinate is in the support of a vector, it is said to \emph{be covered} by the vector.
The (Hamming) distance of two vectors is the number of coordinates at which they differ. The minimum distance $d$ of $\mathcal{C}$ is the minimum value of distances between any two different codewords.
For a $q$-ary $[n,k,d]$ linear code  $\mathcal{C}$, the classical  {\em Singleton bound} says that $d \leq n - k + 1$, where the equality holds when it is an \emph{maximum distance separable (MDS)} code.

For a $q$-ary $[n,k,d]$ linear code $\mathcal{C}$, the {\em Singleton defect} of $\mathcal{C}$ is defined as $\lambda(\mathcal{C}) := n-k-d+1.$
A linear code $\mathcal{C}$ with $\lambda(\mathcal{C})=0$ is precisely an MDS code. A linear code $\mathcal{C}$
with $\lambda(\mathcal{C})=1$ is called an {\em almost MDS code} \cite{deBoer1996DCC_AMDS}.
%An optimal $(n,k,r)$-LRC attaining the Singleton-like bound has defect $\lceil k/r \rceil-1$.
For a nontrivial $q$-ary $[n,k,d]$ code with defect $\lambda(\mathcal{C})= \lambda,$ the following lemma holds.

\begin{lemma}[\cite{Faldum97DCC_Defect}] \label{lemma:defect}
A $q$-ary $[n,k,d]$ linear code $\mathcal{C}$ with dimension $k\ge 2$ and Singleton defect $\lambda(\mathcal{C}) = \lambda$ has minimum distance $d \le q(\lambda+1)$.
\end{lemma}

An almost MDS code $\mathcal{C}$ satisfying the further condition that $\lambda(\mathcal{C}^\perp)=1$ is called a {\em near MDS code}~\cite{Dodunekov1995_nearMDS}.
The dual code of an $[n,k,n-k]$ near MDS code is an  $[n,n-k,k]$  near MDS code \cite{Dodunekov1995_nearMDS}. When $q=2$, unlike binary MDS codes which are all trivial linear codes with parameters $[n,n-1,2]$ or $[n,1,n]$, there exist some nontrivial binary near MDS codes. Since both the $[n,k,n-k]$ near MDS code and its $[n,n-k,k]$ dual code  have defect $\lambda=1$, by Lemma \ref{lemma:defect}, we have $n-k\le 2q = 4$, $k\le 2q=4$, and $n\le 8$ for binary near MDS codes.
In fact, the next result holds for binary near MDS codes.
\begin{lemma} [\cite{Dodunekov1995_nearMDS}]\label{binary-near-mds}
When $k\ge 3$ and  $d \ge 3$, up to the equivalence of linear codes, there only exist four nontrivial binary $[n,k]$ near MDS codes, i.e., the binary $[7,4,3]$  Hamming code, the binary $[8,4,4]$  extended Hamming code, the binary $[7,3,4]$  Simplex code and the binary $[6,3,3]$ punctured Simplex code.
\end{lemma}

\section{Parity-check Matrix Approach for LRCs} \label{sec:PCM}

In this section, we propose a parity-check matrix approach to analyze the bounds of $q$-ary $(n,k,r)$-LRCs with all symbol locality. Firstly, we set up a  characterization of the parity-check matrix for an $(n,k,r)$-LRC. Then, by analyzing the characterized parity-check matrix, we give a simple and unified proof of the Singleton-like bound (\ref{singleton-like-bound}) and  Cadambe-Mazumdar bound (\ref{general-bound-k}). The new  proof approach reveals the connections of these two bounds.  It is shown  that the  Singleton-like bound (\ref{singleton-like-bound}) is a special case of the Cadambe-Mazumdar bound (\ref{general-bound-k}).

\subsection{Characterization of  Parity-check Matrix }\label{cpm}

Suppose that $\mathcal{C}$ has locality $r$, or consider $\mathcal{C}$ as a $q$-ary $(n,k,r)$-LRC with all symbol locality. By choosing $n-k$ linearly independent codewords from the dual code $\mathcal{C}^\bot$, we can obtain a full-rank parity-check matrix of $\mathcal{C}$.
The locality property of an LRC can  be characterized by the parity-check matrix. In order to find a suitable parity-check matrix to involve locality, we begin with a simple observation:

\begin{claim}\label{claim:locality}
A code symbol has locality $r$ if and only if there exists a codeword in  $\mathcal{C}^\bot$ which has at most $r+1$ non-zero components and covers the coordinate of this symbol.
\end{claim}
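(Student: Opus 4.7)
The plan is to prove the two directions of the equivalence separately, exploiting the fact that for a linear code the parity-check equations are exactly the vectors of the dual code $\mathcal{C}^{\perp}$.

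The "if" direction is routine. Given a vector $h=(h_1,\ldots,h_n)\in\mathcal{C}^{\perp}$ with $|\mathrm{supp}(h)|\le r+1$ and $i\in\mathrm{supp}(h)$, the relation $\sum_{j} h_j c_j = 0$, valid for every codeword $c\in\mathcal{C}$, can be rearranged as $c_i = -h_i^{-1}\sum_{j\in\mathrm{supp}(h)\setminus\{i\}} h_j c_j$. Hence the set $R_i := \mathrm{supp}(h)\setminus\{i\}$ has size at most $r$ and witnesses locality $r$ for $c_i$.

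The "only if" direction needs a short duality argument. Assume $c_i$ is recoverable from a set of symbols indexed by $R_i\subseteq[n]\setminus\{i\}$ with $|R_i|\le r$. Recoverability implies that every codeword $z\in\mathcal{C}$ with $z_{R_i}=0$ also satisfies $z_i=0$; equivalently, the projection $V := \pi_{\{i\}\cup R_i}(\mathcal{C}) \subseteq F_q^{\{i\}\cup R_i}$ does not contain the standard basis vector $e_i$. Since $e_i\notin V$, basic duality in the finite-dimensional space $F_q^{\{i\}\cup R_i}$ guarantees the existence of a linear functional vanishing on $V$ but nonzero at $e_i$. Reading off its coefficients and padding with zeros outside $\{i\}\cup R_i$ produces a vector $h\in\mathcal{C}^{\perp}$ with $|\mathrm{supp}(h)|\le r+1$ and $h_i\ne 0$, which is the required parity-check equation.

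I expect the only subtlety to be verifying that the constructed $h$ truly has $h_i\ne 0$, so that coordinate $i$ is really "covered" by the parity-check equation. This is precisely where the recoverability hypothesis is used, through the observation $e_i\notin V$. Once that is in hand, the rest is elementary linear algebra, and no further obstacle arises.
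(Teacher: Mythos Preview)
Your argument is correct in both directions. The paper itself does not prove this claim; it is stated as a ``simple observation'' and left unproved, with the only hint being the remark immediately after Definition~1 that $c_i$ has locality $r$ if and only if the $i$th column of the generator matrix is a linear combination of at most $r$ other columns.

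That remark suggests a slightly more direct route for the ``only if'' direction than your projection/duality argument: if $g_i = \sum_{j\in R_i} a_j g_j$ in the generator matrix $G$, then the vector $h$ with $h_i=1$, $h_j=-a_j$ for $j\in R_i$, and $h_j=0$ elsewhere satisfies $Gh^T=0$, hence $h\in\mathcal{C}^\perp$, has weight at most $r+1$, and covers coordinate $i$. Your approach via $e_i\notin \pi_{\{i\}\cup R_i}(\mathcal{C})$ and annihilator duality reaches the same conclusion and has the minor advantage of not presupposing that the recovery map is linear---you correctly derive from recoverability only the implication ``$z_{R_i}=0 \Rightarrow z_i=0$'' (using that $0\in\mathcal{C}$), which is all that is needed. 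Either way the content is elementary linear algebra, and your write-up is sound.
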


According to Claim \ref{claim:locality}, we can select $n-k$ independent codewords from $\mathcal{C}^\bot$ to form a  parity-check matrix $H$ of $\mathcal{C}$ involving locality properties, which is divided into two parts 
\begin{equation}\label{PCM}
H = \left[
      \begin{array}{c}
        H_1 \\
        H_2 \\
      \end{array}
    \right].
\end{equation}
The rows in the upper part $H_1$, or \emph{locality-rows}, cover all the $n$ coordinates and ensure locality. 
All locality-rows in $H_1$ have weight at most $r+1$.
As for the selection procedure, firstly, for the first coordinate, select a codeword from $\mathcal{C}^\bot$ with weight at most $r+1$ to cover it; then, for the first uncovered coordinate, select another codeword from $\mathcal{C}^\bot$  with weight at most $r+1$ to cover it; repeating the procedure iteratively until all the $n$ coordinates are covered and $H_1$ is constructed. Let $l$ be the number of rows in $H_1$ (or the number of locality-rows). Clearly, these $l$ rows are linearly independent. Then, we select some other  $n-k-l$ independent codewords  from $\mathcal{C}^\bot$ to form the lower part $H_2$, and the construction of $H$ completes.
The details are given as follows.

\begin{center}
\fbox{%
\parbox{13.3cm}{%
%\textbf{Algorithm 1} Procedures for producing a parity-check matrix

1.\ \ Let $i =1, S_0 = \{\}$.\quad\quad\quad\quad\quad\quad\quad\quad\quad // \small{initialization.}

2.\ \ While $S_{i-1} \neq [n]$:

3.\ \ \hspace{0.5cm} Pick $j \in [n] \setminus S_{i-1}$. \quad\quad\quad\quad\quad\quad\quad\quad// \small{pick a coordinate $j$ not covered.}

4.\ \ \hspace{0.5cm} Choose $\mathbf{h}_i = \mathop{\argmin}_{\mathbf{e} \in \mathcal{C}^\bot, \;e_j \neq 0} \;{wt(\mathbf{e})}.$ \;\;// \small{find a  codeword from $\mathcal{C}^\bot$  covering $j$.}

5.\ \ \hspace{0.5cm} Set $S_i = S_{i-1} \bigcup supp(\mathbf{h}_i)$.\quad\quad\quad\quad\quad\;\;// \small{the set of coordinates covered by the first $i$ rows. }

6.\ \ \hspace{0.5cm} $i=i+1.$

7.\ \ Set $l = i-1$. Set $H_1 = \left[\begin{array}{c} \mathbf{h}_1 \\ \vdots \\ \mathbf{h}_l \end{array} \right]$.

8.\ \ Choose additional $n-k-l$ vectors from $\mathcal{C}^\bot$  such that $H_2 = \left[\begin{array}{c} \mathbf{h}_{l+1} \\ \vdots \\ \mathbf{h}_{n-k} \end{array} \right]$ and  $H = \left[\begin{array}{c} H_1 \\  H_2\end{array} \right]$\\

\ \ \ \     is an $(n-k)\times n$ full-rank matrix.

}%
}
\end{center}
In the line 4 of the $i$-th iteration, by the above Claim 1, such a codeword exists and covers at most $r+1$ symbols. Moreover, the $i$-th row covers some coordinates not covered by previous ones, which implies that it is linearly independent with them. Repeat the choosing procedure to get $l$ independent codewords in $H_1$. Clearly, $l\le n-k$ or $l+k\le n$. Moreover, since each of the $l$ rows has weight at most $r+1$, $n\le l(r+1)$, which implies $l+k\le l(r+1)$ or $k/r\le l$. Thus, $l+k\le n$ implies $k/r+k\le n$ or $k/r\le n/(r+1)$. Combining these, we have
\begin{eqnarray}
\label{eql}
\frac{k}{r}\;\le\; \frac{n}{r+1} &\le&l\;\le\; n-k. 
\end{eqnarray}
In the rest of the paper, the rows in $H_1$ are called \emph{locality-rows}.
Since the number of the locality-rows is $l\le n-k$, line 8 is always feasible.

\subsection{Unified Proof for Several Different Bounds of LRCs}

By analyzing the characterized  parity-check matrix, we   present a unified proof for the Singleton-like bound  and several  bounds concerning the field size, including the well-known Cadambe-Mazumdar bound.
This offers a new way to understand different bounds of LRCs from a viewpoint of parity-check matrix.

\smallskip

\begin{proposition}\label{prop:general-bound-distance}
For a $q$-ary $(n,k,r)$-LRC $\mathcal{C}$ with all symbol locality, the minimum distance satisfies
\begin{equation} \label{general-bound-d}
d \le \min_{ 1 \leq \tau \leq \left\lceil\frac{k}{r}\right\rceil-1} \; d^{(q)}_{\rm opt}(n  - \tau(r+1),k-\tau r),
\end{equation}
where $d_{\text{opt}}^{(q)}(n^*,k^*)$ is the largest  minimum distance of a $q$-ary linear code with length $n^*$ and dimension  $k^*$.
\end{proposition}

\begin{proof}
Let $H$ be the proposed parity-check matrix of $\mathcal{C}$ in Section \ref{cpm}. By (\ref{eql}), we know $l \ge \left\lceil\frac{k}{r}\right\rceil$. Consider the first $\tau$ locality-rows of $H_1$, where $1 \leq \tau \leq \left\lceil\frac{k}{r}\right\rceil-1$. Let $\gamma$ be the number of the columns that the non-zero entries of these $\tau$ locality-rows lie in. Then the locality property implies $\gamma \leq \tau(r+1)$. By removing the first $\tau$ locality-rows and the corresponding $\gamma$ columns of $H$, and further removing $\tau(r+1)-\gamma$ columns, we have an $m^*\times n^*$ submatrix $H^*$, where $m^*= n-k-\tau$ and $n^* = n - \tau(r+1)$. Let $\mathcal{C}^*$ be the $[n^*,k^*,d^*]$ linear code with $H^*$ as parity-check matrix.
Among the corresponding $n^*$ columns of $H$, since the entries above $H^*$ are all zeros, $d \leq d^*$.
Moreover, by ${\rm rank}(H^*)\leq  n-k-\tau$, we have  $k^* = n^*-{\rm rank}(H^*) \geq   k-\tau r>0$. Hence,
\begin{eqnarray}
 d\;\le \;d^*  \;\leq\; d^{(q)}_{\rm opt}(n^*,k^*) \;\leq\; d^{(q)}_{\rm opt}(n - \tau(r+1),k-\tau r). \label{C*}
\end{eqnarray}
Since  $1 \leq \tau \leq \left\lceil\frac{k}{r}\right\rceil-1$, the conclusion follows.
\end{proof}

Let  $n_{\text{opt}}^{(q)}(k',d')$ be the smallest length of a $q$-ary linear code with dimension $k'$ and minimum distance $d'$. Let  $k_{\text{opt}}^{(q)}(n',d')$ be the largest  dimension of a $q$-ary linear code with length $n'$ and minimum distance  $d'$.
Using the similar argument in the proof of Proposition \ref{prop:general-bound-distance},  by substituting (\ref{C*}) with
\begin{equation*}
 n =  \tau(r+1) + n^* \geq \tau(r+1) + n_{\text{opt}}^{(q)}(k^*,d^*) \geq  \tau(r+1) +  n^{(q)}_{\rm opt}(k-\tau r, d),
\end{equation*}
or
\begin{eqnarray*}
k\;\le \;k^*+ \tau r  \;\leq\; k^{(q)}_{\rm opt}(n^*,d^*)+\tau r \;\leq\; k^{(q)}_{\rm opt}(n - \tau(r+1),d)+\tau r,
\end{eqnarray*}
we can obtain that
\begin{equation} \label{general-bound-length}
n \geq \max_{ 1 \leq \tau \leq \left\lceil\frac{k}{r}\right\rceil-1} \; \left[ \tau(r+1) +  n^{(q)}_{\rm opt}(k-\tau r, d) \right],
\end{equation}
or
\begin{equation}\label{CM-bound}
k \le \min_{  1 \leq \tau \leq \left\lceil\frac{k}{r}\right\rceil-1} \left[ \tau r + k^{(q)}_{\rm opt}(n  - \tau(r+1),d) \right],
\end{equation}
which is exactly the Cadambe-Mazumdar bound.

Consider the proof in  Proposition \ref{prop:general-bound-distance},
when $\tau=\left\lceil\frac{k}{r}\right\rceil-1$ and invoking the Singleton bound $$ d \le d_{\text{opt}}^{(q)} (n  - \tau(r+1),k-\tau r)\le n-k-\tau +1=n-k-\left\lceil\frac{k}{r} \right\rceil + 2,$$ the Singleton-like bound (\ref{singleton-like-bound})  follows naturally. Thus, this gives unified proof of the bound \eqref{general-bound-d} and the Singleton-like bound (\ref{singleton-like-bound}).
%\end{remark}
Similarly, for  the bounds \eqref{general-bound-length} and \eqref{CM-bound}, when $\tau=\left\lceil\frac{k}{r}\right\rceil-1$ and invoking the Singleton bound
$$n^{(q)}_{\rm opt}(k-\tau r, d) \ge d + k-\tau r -1 \;\; \mbox{ or } \;\; k^{(q)}_{\rm opt}(n  - \tau(r+1),d) \le n  - \tau(r+1) -d+1,$$
the Singleton-like bound (\ref{singleton-like-bound}) can also be obtained.
Clearly, the general bounds (\ref{general-bound-d}) and (\ref{general-bound-length})  are essentially identical to the Cadambe-Mazumdar bound (\ref{CM-bound}).
Since the field size is taken into account, these  general bounds can yield better results than the Singleton-like bound over small fields.

\smallskip
Different bounds can be obtained from the general bound (\ref{general-bound-d}) and (\ref{general-bound-length}) by choosing different bounds of $d^{(q)}_{\rm opt}(n^*,k^*)$ or $ n_{\text{opt}}^{(q)}(k^*,d^*)$. For example, by  applying  the Plotkin bound \cite{MacWilliamsCodingBook}, and the Griesmer bound \cite{HuffmanCodingBook}, the following two bounds can be obtained.

\medskip
\begin{corollary}[Plotkin-like bound]\label{Coro:Plotkin-like-bound}
For a $q$-ary $(n,k,r)$-LRC with all symbol locality,
\begin{equation} \label{Plotkin-like-bound}
d \le \min_{ 1 \leq \tau \leq \left\lceil\frac{k}{r}\right\rceil-1} \; \frac{q^{k-\tau r-1}(q-1)[n  - \tau(r+1)]}{q^{k-\tau r}-1}.
\end{equation}
\end{corollary}

\medskip
\begin{corollary}[Griesmer-like bound]\label{Coro:Griesmer-bound}
For a $q$-ary $(n,k,r)$-LRC with all symbol locality,
\begin{equation} \label{Griesmer-like bound}
n \geq \max_{ 1 \leq \tau \leq \left\lceil\frac{k}{r}\right\rceil-1} \; \left\{ \tau(r+1) + \sum_{i=0}^{k-\tau r-1} \left\lceil\frac{d}{q^{i}}\right\rceil \right\}.
\end{equation}
\end{corollary}

\smallskip
Taking $\tau= \left\lceil\frac{k}{r}\right\rceil-1$ in the Griesmer-like bound (\ref{Griesmer-like bound}), we obtain the following lower bound on the code length of  a $q$-ary $(n,k,r)$-LRC.

\begin{corollary}  \label{Coro:TKR-Griesmer-bound}
For a $q$-ary $(n,k,r)$-LRC with all symbol locality, the code length satisfies
\begin{equation}  \label{tkr-griesmer-like-bound}
 n \geq (r+1)\left(\left\lceil\frac{k}{r}\right\rceil-1 \right) +  \sum_{i=0}^{k+r-1-r\left\lceil\frac{k}{r}\right\rceil } \left\lceil\frac{d}{q^{i}}\right\rceil.
\end{equation}

\end{corollary}
The Singleton-like bound (\ref{singleton-like-bound}) also follows from the bound (\ref{tkr-griesmer-like-bound}) since
\begin{eqnarray*}
& & (r+1)\left(\left\lceil\frac{k}{r}\right\rceil-1 \right) +  \sum_{i=0}^{k+r-1-r\left\lceil\frac{k}{r}\right\rceil } \left\lceil\frac{d}{q^{i}}\right\rceil  \\
&=& (r+1)\left(\left\lceil\frac{k}{r}\right\rceil-1 \right) + d + \sum_{i=1}^{k+r-1-r\left\lceil\frac{k}{r}\right\rceil } \left\lceil\frac{d}{q^{i}}\right\rceil  \\
&\geq &  (r+1)\left(\left\lceil\frac{k}{r}\right\rceil-1 \right) + d + k+r-1-r\left\lceil\frac{k}{r}\right\rceil  \\
&=&  d + k + \left\lceil\frac{k}{r} \right\rceil -2.
\end{eqnarray*}
The binary $[15,11,3]$ Hamming code which has locality $r=7$ and the binary $[23,12,7]$ Golay code with $r=7$ attain the bound (\ref{tkr-griesmer-like-bound}) with equality, which certifies the tightness of bound (\ref{tkr-griesmer-like-bound}). Note that these two binary linear codes do not attain the Singleton-like bound (\ref{singleton-like-bound}).

\medskip
\section{Properties of  $q$-ary Optimal LRCs Attaining the Singleton-like bound} \label{sec:property}

In this section,  firstly, we proposed  an alternative simple and refined proof for the Singleton-like bound \eqref{singleton-like-bound} by analyzing the characterized parity-check matrix $H$ in Section \ref{cpm}.
It is shown that the parity-check matrix $H$ must have  $n-k-\left\lceil\frac{k}{r} \right\rceil + 2$ linearly dependent columns, since their nonzero entries lie in at most $n-k-\left\lceil\frac{k}{r} \right\rceil + 1$ rows.
Then, by using the new  proof technique, we obtain several useful  structural properties of  $q$-ary optimal $(n,k,r)$-LRCs attaining the Singleton-like bound.

\smallskip
\begin{proposition} [Singleton-like bound \cite{Gopalan2012IT_LRC}]\label{prop:singleton-bound}
For an $(n,k,r)$-LRC with all symbol locality, the minimum distance
    $d\leq n-k- \left\lceil\frac{k}{r} \right\rceil + 2.$
\end{proposition}

\begin{proof}
It is enough to show the proposed parity-check matrix $H$ in Section \ref{cpm} must have $n-k-\left\lceil\frac{k}{r} \right\rceil + 2$ linearly dependent columns.
By (\ref{eql}), the number of the locality-rows is $l \ge \left\lceil\frac{k}{r}\right\rceil$.
Now consider the first $\tau=\left\lfloor\frac{k}{r}\right\rfloor$ locality-rows of $H_1$. Let $\gamma$ be the number of the columns that the non-zero entries of these $\tau$ locality-rows lie in. Then the locality property implies $\gamma \leq \tau(r+1)$. The number of the remaining columns is $n - \gamma \geq n - \tau(r+1)$, where the equality holds if and only if the supports of the first $\tau$ locality-rows are pairwise disjoint and each has weight exactly $r+1$. The number of the remaining rows is $\eta=n-k-\tau$.

\textbf{Case 1}: If $r\nmid k$,  then $n-\gamma\ge n - \tau (r+1) > n-k-\tau=\eta$, i.e.,
\begin{equation}
n-\gamma\ge \eta+1 = n - k - \left\lfloor\frac{k}{r}\right\rfloor +1 = n - k - \left\lceil\frac{k}{r}\right\rceil +2. \label{conclusion1}
\end{equation}
The first $\eta+1$ columns in the remaining $n-\gamma$ columns must be linearly dependent since the non-zero entries of these columns are contained in only $\eta$ rows. This implies that $d\le \eta+1 = n - k - \lceil k/r \rceil +2.$

\textbf{Case 2}: If $r\mid k$, then $n-\gamma\ge n - \tau(r+1) = n-k-\tau=\eta.$
If $n-\gamma\ge \eta+1$, we have $d\le \eta+1$ with similar arguments to Case 1. Otherwise, if $n-\gamma = \eta$, then the supports of the first $\tau$ locality-rows are pairwise disjoint and each has weight exactly $r+1$. Choose two columns from the support of the first locality-row, and combine these two columns with the remaining $\eta$ columns, we have $\eta+2$ columns.
These $\eta+2$ columns have their non-zero entries existing in only $\eta+1$ rows, and thus are linearly dependent. This implies that $d\le \eta+2=n-k-\frac{k}{r}+2$.

Combining the above two cases, the conclusion follows.
\end{proof}

By using the above  proof technique based on parity-check matrix, we can obtain the following three   properties on the structures of  $q$-ary  optimal $(n,k,r)$-LRCs attaining the Singleton-like bound \eqref{singleton-like-bound}.

\smallskip
\begin{theorem} \label{thm:subcode-mds}
Suppose that $k>r\geq 1$. Let $\mathcal{C}$ be a $q$-ary optimal   $(n,k,r)$-LRC with $d=n-k-\lceil{k}/{r}\rceil+2$ and $H$ be its parity-check matrix constructed in Section \ref{cpm}.
Let $H'$ be an $m'\times n'$ submatrix obtained from $H$ by removing  any fixed  $\left\lceil{k}/{r}\right\rceil-1$ locality-rows and all the columns whose coordinates are covered by the supports  of these $\left\lceil{k}/{r}\right\rceil-1$ locality-rows.
Let $H''$ be an $m''\times n''$  submatrix obtained from $H$ by removing any fixed $\lceil k/r \rceil- 2$ locality-rows and all the associated columns.   Then,
\begin{enumerate}
  \item  $H'$ has full rank and $m'=d-1$. The $[n',k',d']$ linear code $\mathcal{C}'$ with $H'$ as  parity-check matrix is a $q$-ary MDS code with $d'=d$.
  \item   $H''$ also has full rank and $m'' = d$. The $[n'',k'',d'']$ linear code $\mathcal{C}''$ with $H''$ as  parity-check matrix is a $q$-ary almost MDS code with $d'' = d$.
\end{enumerate}

\end{theorem}

\begin{proof}
For the first part of the theorem, by (\ref{eql}),  $H$ contains $l \ge \left \lceil k/r \right \rceil$ locality-rows.   After removing  any fixed $\left \lceil k/r  \right \rceil-1$ locality-rows from $H$, the number of the remaining rows is
\begin{eqnarray}
m'=n - k - \left \lceil \frac{k}{r} \right \rceil + 1\ge 1,\label{eqm'}
\end{eqnarray}
and $H'$ has at least one row with weight at most $r+1$, which has ever been a locality-row of $H$ before removing.
Let $\gamma$ be the number of the columns covered by the removed $\left\lceil{k}/{r}\right\rceil-1$ locality-rows. Since every locality-row has weight at most $r+1$, we have $\gamma \leq (\left\lceil{k}/{r}\right\rceil-1)(r+1)$.
Combining $(\left\lceil{k}/{r}\right\rceil-1) \cdot r<k$, we have
\begin{equation}
  n'\geq n - \left(\left \lceil \frac{k}{r} \right \rceil-1\right )(r+1) > n - k - \left \lceil \frac{k}{r} \right \rceil + 1 = m'\ge 1.
\end{equation}
Consider the $[n',k',d']$ linear code $\mathcal{C}'$ with $H'$ as  parity-check matrix, by the classical Singleton bound,
\begin{equation} \label{H*1}
     d' \le n'- k' + 1 = {\rm Rank}(H') + 1 \le m' + 1.
\end{equation}
Among these $n'$ columns of $H$, since the entries above $H'$ are all zeros, we have $d \leq d'$.
Since $\mathcal{C}$ is an optimal LRC with $d = n - k - \left\lceil{k}/{r}\right\rceil + 2$, we obtain
\begin{equation} \label{H*2}
     d' \geq d = n - k -\left \lceil \frac{k}{r} \right \rceil + 2 = m' +1.
\end{equation}
Combining (\ref{H*1}) and (\ref{H*2}), we have $d'=m'+1$. Hence, all equalities in   (\ref{H*1}) and   (\ref{H*2}) hold. Therefore, $m'={\rm Rank}(H')=n'-k'=d-1$ and   $\mathcal{C}'$ is a $q$-ary MDS code with $d' = d$.

\smallskip

For the second part of the theorem, firstly we observe that $\lceil k/r \rceil- 2$ is nonnegative, as $k>r$ by assumption, and $H'' = H$ when $\lceil k/r \rceil = 2$.
The number of rows in $H''$ is
\begin{equation}
  m'' = n-k-\left \lceil \frac{k}{r} \right \rceil+ 2.
\end{equation}
According to the first part of this theorem, by removing one more locality-row and all the columns covered by this locality-row from $H''$, the resulting submatrix $H'$ has full rank, and the $[n',k',d']$ linear code with $H'$ as parity-check matrix has minimum distance $d'=d$.
Among these $n''$ or $n'$ columns of $H$ which correspond to the columns of $H''$ or $H'$, since the entries above $H''$ or $H'$ are all zeros,
\begin{equation}
  d \le d'' \le d' = d.
\end{equation}
Therefore, $d'' = d$.
Since $H'$ has full rank, it is not hard to see that $H''$ also has full rank. Hence,
\begin{equation}
  n''-k'' = m'' \mbox{ and }  d''=d = m''.
\end{equation}
The Singleton defect is $n''-k''-d''+1 = 1$. Therefore, $\mathcal{C}''$ is a $q$-ary almost MDS code with $d'' = d$.

Combining all the above discussions of two cases, the theorem follows.
\end{proof}

\smallskip
\begin{lemma}\label{lemma:structural-condition}
Suppose that $k>r\geq 1$. Let $\mathcal{C}$ be a $q$-ary optimal   $(n,k,r)$-LRC with $d=n-k-\lceil{k}/{r}\rceil+2$ and $H$ be its parity-check matrix constructed in Section \ref{cpm}.
\begin{itemize}

\item If $r \mid k$, then $(r+1) \mid n$ and the supports of the locality-rows in the parity-check matrix $H$ must be pairwise disjoint, and every locality-row has weight exactly $r+1$.
\item If $r \nmid k$, then the supports of any $ \left\lceil k/r \right\rceil$ locality-rows in the parity-check matrix  $H$ cover at least $ k+ \left\lceil  k/r \right\rceil$ coordinates.
\end{itemize}
\end{lemma}

\smallskip
\begin{proof}
The proofs are divided into two cases as follows.
\begin{itemize}
\item
If $r\mid k$, then $\tau=\frac{k}{r}\ge 2$.
Consider the first $\tau=\frac{k}{r}$ locality-rows of $H_1$. Let $\gamma$ be the number of the columns covered by these $\tau$ locality-rows. Then the locality property implies $\gamma \leq \tau(r+1)$, which indicates that the number of the remaining columns is $n - \gamma \geq n - \tau(r+1)$, where the equality holds if and only if the supports of the first $\tau$ locality-rows are pairwise disjoint and each has weight exactly $r+1$. The number of the remaining rows is $\eta=n-k-\tau$ and  $d=n-k-\frac{k}{r}+ 2=\eta+2$.
For $\tau=\frac{k}{r}\ge 2$, it follows that  $n-\gamma\ge \eta$.
If $n-\gamma\ge \eta+1$, then the first $\eta+1$ columns in the remaining $n-\gamma$ columns must be linearly dependent since the non-zero entries of these columns are contained in only $\eta$ rows. This implies that $d\le \eta+1$.
Thus, we have that $n-\gamma=\eta=n-k-\tau=n-\tau(r+1)$. So the supports of the first $\tau$ locality-rows are pairwise disjoint and each has weight exactly $r+1$. It is easy to see that if we choose any fixed $\tau=\frac{k}{r}$ locality-rows of $H_1$, the same arguments still hold. Hence, we have that the supports of any fixed $\tau$ locality-rows are pairwise disjoint and each has weight exactly $r+1$, which implies the supports of all locality-rows in $H_1$ are pairwise disjoint and each has weight exactly $r+1$, which implies that $(r+1)\mid n$.

\item
If $r\nmid k$. Assume the contrary that there are $\left\lceil k/r\right\rceil$ locality-rows whose nonzero entries cover less than $ k+ \left\lceil k/r \right\rceil$ columns, then the number of remaining columns is greater than $n - k- \left\lceil k/r \right\rceil$, the number of remaining rows is $ n - k -\left\lceil k/r \right\rceil$. There must have $n - k -\left\lceil k/r \right\rceil+1$ columns which are linearly dependent since the non-zero entries of these columns are contained in only $n - k -\left\lceil k/r \right\rceil$ rows, thus $d \leq n - k -\left\lceil k/r \right\rceil+1$, which leads to a contradiction.
\end{itemize}
\end{proof}

\begin{remark}
Lemma \ref{lemma:structural-condition} is identical to the conclusions in \cite[Theorem 2, Corollary 1]{Tamo2013IT_Matroid}.
For the sake of completeness, we include this lemma here by deriving it using this  parity-check matrix approach.
\end{remark}

\medskip

\begin{lemma} \label{lemma:d-dual-r+1}
Suppose that $k>r\geq 1$.  Let $\mathcal{C}$ be a $q$-ary optimal  $(n,k,r)$-LRC with $d=n-k-\lceil{k}/{r}\rceil+2$. If $r \mid k$, then  the dual code $\mathcal{C}^{\bot}$ has  minimum distance   $d(\mathcal{C}^{\bot}) = r+1$.
\end{lemma}
\begin{proof}
 When $r\mid k$, assume the contrary that there is a codeword with weight less than $ r+1$ in the dual code $\mathcal{C}^{\bot}$, then this codeword can be chosen as the locality-row in the parity-check matrix $H$ of $\mathcal{C}$. This contradicts with the first part of Lemma \ref{lemma:structural-condition} that all locality-rows in $H$ must  have unform weight $r+1$. Thus, the minimum weight of the codewords in $\mathcal{C}^{\bot}$ is $r+1$. Therefore, the minimum distance of  $\mathcal{C}^{\bot}$ is  $d(\mathcal{C}^{\bot}) = r+1$.
\end{proof}

\medskip

\section{Upper Bounds on the Minimum Distance of $q$-ary  Optimal LRCs}\label{sec:distance-bound}

Given a $q$-ary optimal  $(n,k,r)$-LRC attaining the Singleton-like bound \eqref{singleton-like-bound} with fixed field size $q$,  an upper bound on its minimum distances  in terms of $q$ is derived in this section. This bound also corresponds to an upper bound on the maximal code length of $q$-ary optimal  $(n,k,r)$-LRCs.

\begin{theorem} \label{thm:distance-bound}
Let $k > r \geq 1$ and $d > 2$. For a $q$-ary optimal   $(n,k,r)$-LRC $\mathcal{C}$  with $d=n-k-\lceil{k}/{r}\rceil+2$, its minimum distance is upper bounded by
\begin{equation}
  d \leq
\begin{cases}
  q,  & \mbox{if } r \nmid (k-1), \\
  2q,  & \text{if } r \mid (k-1). \\
\end{cases}
\end{equation}
\end{theorem}

\begin{proof}
Let $H$ be the parity-check matrix of  $\mathcal{C}$ constructed in Section \ref{cpm}.
Let $H'$ be the submatrix obtained from  $H$ by removing any fixed $\lceil \frac{k}{r} \rceil-1 = \lfloor \frac{k-1}{r} \rfloor$ locality-rows and the columns whose coordinates are covered by these removed locality-rows.
By Theorem \ref{thm:subcode-mds}, $H'$ has full rank and the $q$-ary $[n',k',d']$ linear code $\mathcal{C}'$ with $H'$ as parity-check matrix is a $q$-ary MDS code with $d' = d$.
The number of the remaining rows in $H'$ is $$m' = {\rm Rank}(H') = n - k - \left\lfloor \frac{k-1}{r} \right\rfloor.$$
Let $\gamma$ be the number of the columns covered by these  removed $\lfloor \frac{k-1}{r} \rfloor$ locality-rows.
Then, $\gamma  \le \lfloor \frac{k-1}{r} \rfloor(r+1).$
The number of the columns in $H'$ is $n' =  n - \gamma  \ge n - \lfloor \frac{k-1}{r} \rfloor(r+1).$
Therefore, $\mathcal{C}'$ has dimension
\begin{eqnarray}
 k' = n' - {\rm Rank}(H')   \ge   k- \left \lfloor \frac{k-1}{r} \right \rfloor \cdot r \ge 1. \label{k'-condition}
\end{eqnarray}

Then, we distinguish two cases of $r \nmid (k-1)$ and $r \mid (k-1)$.

\smallskip

\textbf{Case 1}: Suppose that $r \nmid (k-1)$, we have
\begin{equation}
  k' \ge k- \left \lfloor \frac{k-1}{r}  \right \rfloor \cdot r > 1.
\end{equation}
Hence, $C'$ has dimension $k' \ge 2$. The MDS code $C'$ has defect $\lambda(C')=0$. By Lemma~\ref{lemma:defect}, the minimum distance of $\mathcal{C}'$  satisfies $d' \le q$. Then, $\mathcal{C}$ has  minimum distance   $d = d' \le q$.

\textbf{Case 2}: Suppose that $r \mid (k-1)$, we have
\begin{equation}
  k' \ge k- \left \lfloor \frac{k-1}{r}  \right \rfloor \cdot r = 1.
\end{equation}
If $k' \ge 2$, we have $d = d' \le q$ with similar arguments to Case 1.
If $k' = 1$, then $C'$ is an MDS code with dimension $1$.
Now we remove any fixed $\lceil k/r \rceil- 2$ locality-rows of $H$ and the  columns associated with the coordinates covered by these $\lceil k/r \rceil- 2$ locality-rows.
By the second part of Theorem~\ref{thm:subcode-mds}, the resulting submatrix $H''$ has full rank, and the  $[n'',k'',d'']$ linear code $\mathcal{C}''$  with $H''$  as parity-check matrix is a $q$-ary   almost MDS code with $d''=d$. The number of the rows in $H''$ is $m'' = {\rm Rank}(H'') = n - k - (\lceil k/r \rceil-2).$ The code length of $\mathcal{C}''$  satisfies $n'' \ge n - (\lceil k/r \rceil - 2)(r+1)$. Then, the dimension of $\mathcal{C}''$ is
\begin{align*}
 \phantom{=}  k'' = n'' - {\rm Rank}(H'')   \ge \left [n - \left(\left \lceil \frac{k}{r} \right \rceil - 2 \right)(r+1) \right ] - \left[n-k- \left(\left \lceil \frac{k}{r} \right \rceil-2\right)\right] = -\left \lceil \frac{k}{r} \right \rceil r +2r  + k  > 1.
\end{align*}
The almost MDS code $C''$ has defect $\lambda(C'')=1$. By Lemma~\ref{lemma:defect},  the minimum distance of $\mathcal{C}''$ satisfies $d'' \le 2q$. Therefore, $\mathcal{C}$ has  minimum distance  $d = d'' \le 2q$.

Combining all the above discussions, the theorem holds.
\end{proof}

\bigskip
The next lemma characterizes the structural properties of  $q$-ary optimal   $(n,k,r)$-LRCs with minimum distance $d$ greater than the field size $q$.

\begin{lemma} \label{lemma:d>q}
\label{th-trivial}  
Let $k> r\ge 1$. Let $\mathcal{C}$ be a $q$-ary optimal   $(n,k,r)$-LRC with $d=n-k-\lceil{k}/{r}\rceil+2$ and $H$ be its parity-check matrix constructed in Section \ref{cpm}. If $\mathcal{C}$ has  minimum distance  $d > q$, then the dual code of $\mathcal{C}$  has minimum distance $d(\mathcal{C}^{\perp}) = r+1$, and one of the followings is true:
\begin{itemize}
  \item $r = 1$ and $2 \mid n$.
  \item $r\geq 2$, $k = r+1 = n-d$, and $\mathcal{C}$ is a near MDS code.
  \item $r\geq 2$ and $k = sr+1$, for some $s\geq 2$. In this case, 
      $(r+1) \mid n$ and the supports of the locality-rows in the parity-check matrix must be pairwise disjoint, and each locality-row has weight exactly $r+1$.
\end{itemize}
\label{thm:structure}
\end{lemma}

\begin{proof}
Since $\mathcal{C}$  has  minimum distance $d > q$,  according to  Theorem \ref{thm:distance-bound}, the parameters of $\mathcal{C}$ must satisfy $r \mid (k-1)$, which implies   $$r=1, \;\;\;\; \mbox{ or }  \;\;\;\; r \ge 2  \mbox{ and } k \bmod r = 1 .$$
For the first case of  $r=1$. Since $k$ is divisible by $r=1$, by Lemma \ref{lemma:structural-condition}, we obtain $2 \mid n$, by Lemma \ref{lemma:d-dual-r+1}, the dual code $\mathcal{C}^{\perp}$  has minimum distance    $d(\mathcal{C}^{\perp}) = r+1=2$.
Next, we discuss the case of  $r\geq 2$ and $k  \bmod r = 1$. Let $k = sr+1$ where $s\ge1$ and $H'$ be obtained from $H$ by removing any fixed $\lceil \frac{k}{r}\rceil  -1= \frac{k-1}{r} = s$ locality-rows and all the columns covered by these removed $s$ locality-rows.
By Theorem \ref{thm:subcode-mds}, $H'$ has full rank and the $[n',k',d']$ linear code $\mathcal{C}'$ with $H'$ as parity-check matrix is a $q$-ary MDS code with $d' = d$.
Since each of the removed $ s = \frac{k-1}{r} $ locality-rows has weight at most $r+1$, the code length of $\mathcal{C}'$ is $n' \ge n -  \frac{k-1}{r}   \cdot (r+1)$.
The dimension  of $\mathcal{C}'$ is
\begin{equation}\label{k'ge1}
  k' = n'-{\rm Rank}(H') \ge \left [n -  \frac{k-1}{r}   \cdot (r+1) \right ] - \left [n-k- \frac{k-1}{r} \right ] = k- \frac{k-1}{r}  \cdot r = 1.
\end{equation}
If $\mathcal{C}'$ has dimension $k'\ge 2$, then according to Lemma~\ref{lemma:defect}, the MDS code $\mathcal{C}'$ has minimum distance $d'\le q$, which implies the minimum distance of $\mathcal{C}$ is $d=d' \le q$. This contradicts the assumption that $d > q$.
Therefore, the dimension  of $\mathcal{C}'$ must be $k'=1$.
When $k'=1$, by \eqref{k'ge1}, it follows that  $n'=n -  \frac{k-1}{r}  \cdot (r+1)$.
Hence, the supports of the removed $s =  \frac{k-1}{r} $ locality-rows are pairwise disjoint and each of the removed $s$ locality-rows has weight exactly $r+1$.
Since the removed $s$ rows can be arbitrarily chosen, we  conclude  all locality-rows have weight exactly $r+1$.
Hence, there does not exist a codeword with weight less than $r+1$ in the dual code $\mathcal{C}^{\bot}$. 
Thus,  $\mathcal{C}^{\perp}$  has minimum distance  $d(\mathcal{C}^{\perp}) = r+1$.

Next, we further distinguish two cases of $s=1$ and $s\ge 2$ to discuss the structures of optimal LRCs with locality $r\ge 2$ and dimension $k=sr+1$.

If $s=1$, i.e., $k=r+1$, then $d = n-k-\lceil k/r \rceil +2 = n-k.$
The defect of $\mathcal{C}$ is $\lambda(\mathcal{C})=1$.
Since the dual code has minimum distance  $d(\mathcal{C}^{\perp}) = r+1=k$,  its defect  is $\lambda(\mathcal{C}^{\perp})=1$.
Since both the defects of $\mathcal{C}$ and  $\mathcal{C}^{\bot}$ are  $\lambda(\mathcal{C})=\lambda(\mathcal{C}^{\bot})=1$, we conclude $\mathcal{C}$ is a near MDS code.

If  $s\geq 2$,  since the removed $s$ locality-rows are arbitrarily chosen, the supports of any $s\ge2$  locality-rows with weight $r+1$ must be pairwise disjoint.
Therefore, the supports of all locality-rows in $H$ are pairwise disjoint and each has weight exactly $r+1$, and $n$ is divisible by $r+1$.

Combining all the above discussions, the lemma follows.
\end{proof}

\bigskip
For a $q$-ary optimal  $(n,k,r)$-LRC attaining the Singleton-like bound \eqref{singleton-like-bound}, since $d = n-k-\lceil k/r \rceil + 2$, the upper bound in Theorem \ref{thm:distance-bound}  gives an upper bound on  the maximal code length of a $q$-ary optimal $(n,k,r)$-LRC.

\begin{corollary} \label{length-bound}
Let $k > r \geq 1$ and $d > 2$. For  a $q$-ary optimal $(n,k,r)$-LRC $\mathcal{C}$ attaining the Singleton-like bound \eqref{singleton-like-bound}, its code length is upper bounded by
\begin{equation} \label{maximal-length-bound}
    n \leq
    \begin{cases}
      q + k + \lceil k/r \rceil -2,  & \mbox{if } r \nmid (k-1), \\
      2q + k + \lceil k/r \rceil -2,  & \mbox{if } r \mid (k-1).
    \end{cases}
\end{equation}
\end{corollary}

\smallskip

By the weight distribution of a  $q$-ary $[n,k,d]$ MDS code, we know the code length of an MDS code satisfies  $n \le q+k-1$ \cite{MacWilliamsCodingBook}, which is not tight for most cases. There is a celebrated \emph{MDS Conjecture} \cite{HuffmanCodingBook} on the maximal code length of  MDS code.
As for the maximal code length of $q$-ary optimal $(n,k,r)$-LRCs attaining the Singleton-like bound \eqref{singleton-like-bound}, the  bound \eqref{maximal-length-bound} is a very general upper bound, which might only be tight for some special cases.  It can be regarded as a counterpart of the upper bound $n\le q+k-1$ of MDS codes for the optimal LRCs attaining the Singleton-like bound.

\medskip

\section{ Enumerations of Optimal Binary LRCs  Achieving the Singleton-like Bound}\label{sec:binary-lrc}

It is well known that nontrivial binary MDS codes attaining the Singleton bound do not exist. Besides the linear codes with dimension $n$ and $0$, the only possible binary MDS codes are binary $[n,1,n]$ and $[n,n-1,2]$ codes.
In this section we will enumerate all the optimal binary $(n,k,r)$-LRCs attaining the Singleton-like bound (\ref{singleton-like-bound}) by employing the proposed parity-check matrix approach.
It is proved that in the sense of  equivalence of linear codes, there are only $5$ classes of optimal binary $(n,k,r)$-LRCs with minimum distance $d=n-k-\lceil k/r\rceil +2$. Moreover, we  enumerate all these possible $5$ classes of optimal binary LRCs by presenting their parity-check matrices.

\smallskip
In this section, $q=2$ is assumed. Suppose that $d\ge 2$, $r\ge 1$, and $k>r$ or
$\left\lceil{k}/{r}\right\rceil-1\ge 1$. Let $\mathcal{C}$ be an optimal binary $(n,k,r)$-LRC with $d=n-k-\lceil{k}/{r}\rceil+2$ and $H$ be its parity-check matrix constructed in Section \ref{cpm}. The next result follows from Theorem \ref{thm:subcode-mds}.

\begin{corollary} \label{co1-binary}
Let $H'$ be the $m'\times n'$ matrix obtained from $H$ by removing  any fixed  $\left\lceil{k}/{r}\right\rceil-1$ locality-rows and all the columns  covered by these removed locality-rows.
Then $H'$ has full rank and the binary $[n',k',d']$ linear code $\mathcal{C}'$ with  $H'$ as parity-check matrix  is a binary $[n',n'-1,2]\; (n'\ge 2)$ or $[n',1,n']\; (n'\ge 3)$ linear code.
\end{corollary}

\medskip
Next, we enumerate all  such optimal binary $(n,k,r)$-LRCs attaining the Singleton-like bound (\ref{singleton-like-bound}).

\textbf{Case 1:} $H'$ is a full-rank parity-check matrix of a binary $[n',n'-1,2] \; (n'\ge 2)$ linear code, or an all-one row vector. Moreover, by the proof of Theorem \ref{thm:subcode-mds}, $H'$ has a row with weight at most $r+1$, which implies that $H'$ has to be an all-one row with length at most $r+1$, or $n'\le r+1$.
Since $d'=2$, we have $d=d'=2$ and $n=k+\lceil k/r\rceil$.
Hence, $\mathcal{C}$ must be a binary $[k+\lceil k/r\rceil,k,2]$ linear code with locality $r$. Moreover, by (\ref{eql}), $\lceil k/r\rceil = l = n-k$, which implies that $H$ consists of only locality-rows.

If $r\mid k$, then $n=(r+1)k/r$ and $n-k=k/r$, all $k/r$ rows of $H$ must have uniform weight $r+1$ and pairwise disjoint supports. Then, in the sense of equivalence, the  $(k+k/r,k,r)$-LRC must have  parity-check matrix
\begin{equation}
\label{h31}
H= \Big(I_{\frac{k}{r}} \otimes (\underbrace{1,1,\ldots,1}_{r+1})\Big)_{\frac{k}{r}\times \frac{(r+1)k}{r}}\;,
\end{equation}
where $A \otimes B$ denotes the Kronecker product of matrices and $I_m$ denotes the $m\times m$ identity matrix.
For example, if $n=9,k=6,r=2$, the parity-check matrix of optimal binary $(9,6,2)$-LRC  is
\begin{equation*}
H=
\left(
  \begin{array}{ccccccccc}
    1 & 1 & 1 & 0 & 0 & 0 & 0 & 0 & 0 \\
    0 & 0 & 0 & 1 & 1 & 1 & 0 & 0 & 0 \\
    0 & 0 & 0 & 0 & 0 & 0 & 1 & 1 & 1 \\
  \end{array}
\right).
\end{equation*}

\smallskip
If $r\nmid k$, then $r\ge 2$. Let $k=sr+t$, where $1\le t\le r-1$, then $\lceil \frac{k}{r}\rceil = s+1$, $n=k+\lceil \frac{k}{r}\rceil=(r+1)\lceil \frac{k}{r}\rceil-(r-t)$, where $1\le r-t\le r-1$. Let $\hat H$ be a binary $\lceil \frac{k}{r}\rceil\times (r+1)\lceil \frac{k}{r}\rceil$ matrix in (\ref{h31}), where $\frac{k}{r}$ is changed to $\lceil \frac{k}{r}\rceil$.
\begin{eqnarray}
&&\mbox{$H$ is a $\lceil \frac{k}{r}\rceil\times (k+\lceil \frac{k}{r}\rceil)$ matrix obtained from $\hat H$ by deleting any $r-t$  }\nonumber\\
&&\qquad \mbox{columns of $\hat H$, such that at least one row of $H$ has weight $r+1$;}
\qquad\label{h32} \\
&&\mbox{$\underline H$ is obtained from $H$ by substituting at most $r-t$ 0's of $H$ to 1's,}\nonumber\\
&&\qquad \mbox{ such that the weight of each row of $\underline H$ is at most $r+1$.}\nonumber
\end{eqnarray}
Then, in the sense of equivalence, every $(k+\lceil k/r\rceil,k,r)$-LRC with minimum distance $d=2$ must have parity-check matrix as $H$ or $\underline H$ in (\ref{h32}). For example, if $n=10,k=7,r=3$, the parity-check matrix can be
\begin{equation*}
H=\left(
  \begin{array}{ccccccccccc}
   1 & 1 & 1 & 1 & 0 & 0 & 0 & 0 & 0 & 0 \\
   0 & 0 & 0 & 0 & 1 & 1 & 1 & 1 & 0 & 0 \\
   0 & 0 & 0 & \underline{0} & 0 & 0 & \underline{0} & \underline{0} & 1 & 1 \\
  \end{array}
\right),
\end{equation*}
where any one or two of the three zeros with underline can be substituted to $1$, and $\underline H$ is thus obtained.

\medskip

\textbf{Case 2:}  $H'$ is a full-rank parity-check matrix of a binary $[n',1,n'] \; (n'\ge 3)$ linear code.
In this case, the minimum distance of $\mathcal{C}$ is $d= d' = n' > q= 2$. $H'$ is an $(n'-1)\times n'$ matrix with $n' =d= n - k - \left \lceil {k}/{r}\right  \rceil +2$. By Theorem \ref{thm:distance-bound}, since  $\mathcal{C}$ has minimum distance  $d  > q$, it follows that $ r \mid (k-1)$. Hence,
$$r=1, \;\; \mbox{ or } \;\; r \ge 2 \mbox{ and } k  \bmod r = 1.$$

\medskip
If $r=1$, then $r \mid k$.
By Lemma \ref{lemma:structural-condition}, all locality-rows of $H$ have uniform weight $2$ and pairwise disjoint supports. Moveover, $n$ is divisible by $2$.
Let $n = 2 l$, where $l$ is the number of the locality-rows in $H$,  then $n' = d'= d = n-k-\left \lceil {k}/{r}\right \rceil+2  = 2(l-k+1).$ By Theorem \ref{thm:distance-bound}, since $r \mid (k-1)$,
\begin{equation}
 d= 2(l-k+1) \le 2q = 4,
\end{equation}
i.e., $l-k \le 1$. By $n' =2(l-k+1) \ge 3$, we have $l -k \ge 1$. Hence, $l=k+1$.
$\mathcal{C}$ must have parameters
\begin{equation}
  n=2k+2,\; k\ge2, \; r=1, \; d=4.
\end{equation}
In the sense of equivalence, the parity-check matrix of $\mathcal{C}$  has to be
\begin{equation}\label{construction-d-4-1}
    H = \left(
          \begin{array}{c}
            \quad I_{k+1} \;\otimes (1 \ 1)\\
            \hline
            \underbrace{(1 \cdots 1)}_{k+1} \otimes \;(0\  1)
             \\
          \end{array}
        \right)_{(k+2)\times (2k+2)}.
\end{equation}
For example, if $n=8,k=3,r=1$, it is
\begin{equation*}
H = \left(
  \begin{array}{cccccccc}
    1 & 1 & 0 & 0 & 0 & 0 & 0 & 0  \\
    0 & 0 & 1 & 1 & 0 & 0 & 0 & 0  \\
    0 & 0 &0 & 0 & 1 & 1 & 0 & 0  \\
    0 & 0 &0 & 0 & 0 & 0 & 1 & 1  \\
    0 & 1 &0 & 1 & 0 & 1 & 0 & 1
  \end{array}
\right).
\end{equation*}

\bigskip

If $r \ge 2$ and $k \bmod r = 1 $. Let $k = sr+1$  where $s\ge 1$. 
By Lemma \ref{lemma:d>q}, the dual code $\mathcal{C}^{\bot}$ has minimum distance $d(\mathcal{C}^{\bot})=r+1$ and all locality-rows of $H$ have uniform weight $r+1$. Next, we divide the enumeration of all such optimal binary $(n,k,r)$-LRCs into two cases: $s=1$ and $s \ge 2$.

\smallskip
For the case that $s=1$, by Lemma \ref{lemma:d>q}, we know $\mathcal{C}$ is a binary near MDS code.
By Lemma \ref{binary-near-mds}, when $k=r+1 \ge 3$ and $d=n' \ge 3$, up to the equivalence of linear codes, there exist exactly  four binary near MDS codes, whose parameters  are respectively
\begin{itemize}
  \item  the binary $[7,4,3]$  Hamming code with locality $r=3$;
  \item  the binary $[8,4,4]$  extended Hamming code with locality $r=3$;
  \item  the binary $[7,3,4]$  Simplex code with locality $r=2$;
  \item  the binary $[6,3,3]$ punctured Simplex code with $r=2$  and its parity-check matrix is
\end{itemize}
\begin{equation} \label{6-3-3-near-mds}
H = \left(
  \begin{array}{cccccc}
    0 & 1 & 1 & 1 & 0 & 0  \\
    1 & 0 & 1 & 0 & 1 & 0  \\
    1 & 1 & 0 & 0 & 0 & 1
  \end{array}
\right).
\end{equation}
It is not hard to verify that all these  four binary near MDS codes have above locality and are optimal binary LRCs attaining the  Singleton-like bound \eqref{singleton-like-bound}. In summary, their parameters are 
\begin{equation}
  n=k+d , \; 3\le k \le 4,  \; r=k-1, \;3 \le d \le 4.
\end{equation}

For the case that $s\ge 2$,  by Lemma \ref{lemma:d>q},  we know $(r+1)\mid n$ and all locality-rows of $H$ have uniform weight $r+1$ and pairwise disjoint supports. Let $l$ denote the number of locality-rows in $H$ and $n = l(r+1)$, then $n' =d' = d = n-k-\left \lceil {k}/{r}\right \rceil+2 = (l-s)(r+1)$.
By Theorem \ref{thm:distance-bound}, since   $r \mid (k-1)$,
\begin{equation}\label{binary-case2-d}
  d = (l-s)(r+1) \le 2q = 4.
\end{equation}
Since $\mathcal{C'}$ has the all-one codeword and $H'$ contains a row with weight $r+1$ which is a locality-row of $H$ before removing, we obtain that  $r+1$ must be even. Hence, $r+1 \neq 3$. Then,  $r+1 \ge 4$. By \eqref{binary-case2-d}, we have   $r+1 = 4$ and $l -s = 1$.
Since $s \ge 2$, we have $l \ge 3$. Therefore, $\mathcal{C}$ must have parameters 
\begin{equation}
  n=l(r+1)=4l, \; k=sr+1=(l-1)*3+1=3l-2,  \; r=3, \; d=4, \; l\ge 3.
\end{equation}
In the sense of equivalence, its parity-check matrix has to be
\begin{equation}\label{construction-d-4-2}
    H = \left(
          \begin{array}{c}
            I_l \otimes (1 \ 1 \ 1 \ 1)\\
            \hline
            \underbrace{(1 \ 1 \cdots 1)}_{l} \otimes \left(
                                                        \begin{array}{cccc}
                                                          0 & 0 & 1 & 1 \\
                                                          0 & 1 & 0 & 1 \\
                                                        \end{array}
                                                      \right)
             \\
          \end{array}
        \right)_{(l+2)\times 4l}.
\end{equation}
For example, if $n=12,k=7,r=3$, the parity-check matrix is
\begin{equation*}
H = \left(
  \begin{array}{cccccccccccc}
    1 & 1 & 1 & 1 & 0 & 0 & 0 & 0 & 0 & 0 & 0 & 0  \\
    0 & 0 & 0 & 0 & 1 & 1 & 1 & 1 & 0 & 0 & 0 & 0  \\
    0 & 0 & 0 & 0 & 0 & 0 & 0 & 0 & 1 & 1 & 1 & 1  \\
    0 & 0 & 1 & 1 & 0 & 0 & 1 & 1 & 0 & 0 & 1 & 1  \\
    0 & 1 & 0 & 1 & 0 & 1 & 0 & 1 & 0 & 1 & 0 & 1  \\
  \end{array}
\right).
\end{equation*}

\smallskip
Combining all of these discussions in this section, we have the following theorem.\footnote{Enumerations of optimal  binary LRCs was presented in \cite{Hao2016ISIT_BinaryLRC}, where only the first four classes of optimal codes in Theorem \ref{thm:binary-lrc} were found. In this paper, we  employ the results in \cite{Hao2018ISIT_LRCMaximalLength}, i.e., Theorem \ref{thm:distance-bound} and Lemma \ref{lemma:d>q} in Section \ref{sec:distance-bound}, to revise and simplify the analysis procedure. The proof is fixed  to enumerate  all the optimal binary LRCs, including the four specific binary  near MDS codes.}

\begin{theorem} \label{thm:binary-lrc}
Let $r\ge 1$, $k>r$ and $d\ge 2$. There are $5$ classes of optimal binary $(n,k,r)$-LRCs attaining the Singleton-like bound \eqref{singleton-like-bound}, whose parameters and parity-check matrices are respectively
\begin{itemize}
\item $(k+k/r,k,r)$, $d=2$, $k>r\ge 1$, $r\mid k$, $H$ in (\ref{h31});

\item $(k+\lceil k/r\rceil,k, r)$, $d=2$, $k>r\ge 1$, $r\nmid k$, $H$ or $\underline H$ in (\ref{h32});

\item $(2k+2,k,1)$, $d=4$, $k\ge 2$, $H$ in (\ref{construction-d-4-1});

\item $(4l,3l-2,3)$, $d=4$, $l\ge 3$, $H$ in (\ref{construction-d-4-2});

\item  $(k+d,k,k-1)$, $3 \le d \le 4, 3\le k \le 4$, $H$ of four binary near MDS codes.

\end{itemize}
In the sense of equivalence of linear codes, except these $5$ classes of optimal binary LRCs, there is no other binary $(n,k,r)$-LRC with minimum distance $d=n-k-\lceil k/r\rceil +2$.
\end{theorem}

\smallskip
\begin{remark}
For the optimal binary $(n,k,r)$-LRCs in Theorem \ref{thm:binary-lrc},  the codes in the third and fourth classes, and the binary $[7,3,4]$  Simplex code, the binary $[8,4,4]$  extended Hamming code in the fifth class  have parameters $r \mid (k-1)$ and $d=2q=4$, which attain the upper bound of minimum distance in Theorem \ref{thm:distance-bound}. The code lengths of these  optimal binary LRCs also attain the upper bound of maximal  code length in Corollary \ref{length-bound}.
\end{remark}

\medskip
\section{Conclusions}\label{sec:conclusion}

In this paper, we proposed a  systematic parity-check matrix approach  to study the bounds and constructions of $q$-ary $(n,k,r)$-LRCs.
Firstly,  simple and unified proofs for the well-known bounds of LRCs were given and
several useful structural properties  on parity-check matrices of $q$-ary optimal $(n,k,r)$-LRCs were obtained.
We derived  upper bounds on the minimum distance and maximal code length of a $q$-ary  optimal $(n,k,r)$-LRC in terms of field size $q$.
Then, by employing the parity-check matrix approach, we proved that there are only $5$ classes of possible parameters for optimal binary   $(n,k,r)$-LRCs. Moreover,  in the sense of equivalence of linear codes, we completely enumerate  all these $5$ classes of optimal binary  $(n,k,r)$-LRCs by presenting their parity-check matrices.

%\bibliography{lrc}
\bibliographystyle{IEEEtran}

% biography section
%
% If you have an EPS/PDF photo (graphicx package needed) extra braces are
% needed around the contents of the optional argument to biography to prevent
% the LaTeX parser from getting confused when it sees the complicated
% \includegraphics command within an optional argument. (You could create
% your own custom macro containing the \includegraphics command to make things
% simpler here.)
%\begin{IEEEbiography}[{\includegraphics[width=1in,height=1.25in,clip,keepaspectratio]{mshell}}]{Michael Shell}
% or if you just want to reserve a space for a photo:

% insert where needed to balance the two columns on the last page with
% biographies
%\newpage

%\begin{IEEEbiographynophoto}{Jane Doe}
%Biography text here.
%\end{IEEEbiographynophoto}

% You can push biographies down or up by placing
% a \vfill before or after them. The appropriate
% use of \vfill depends on what kind of text is
% on the last page and whether or not the columns
% are being equalized.

\end{document}